\documentclass[a4paper]{article}

\usepackage[T1]{fontenc}
\usepackage[utf8]{inputenc}
\usepackage{lmodern}
\usepackage{microtype}

\usepackage{booktabs}
\usepackage{nicefrac}
\usepackage{siunitx}
\usepackage{xcolor}
\usepackage{todonotes}
\usepackage{svg}
\usepackage{amsmath,amssymb,amsthm}
\usepackage[hidelinks]{hyperref}

\usepackage{subcaption}
\usepackage{cleveref}

\definecolor{myred}{rgb}{0.55, 0.0, 0.0}
\definecolor{mygreen}{rgb}{0.0, 0.42, 0.24}
\definecolor{myblue}{rgb}{0.02, 0.05, 0.6}
\definecolor{myyellow}{rgb}{0.8, 0.6, 0.2}
\definecolor{myteal}{rgb}{0.2, 0.3, 0.3}

\usepackage{tcolorbox}
\usepackage{xspace}
\usepackage{xargs}
\usepackage{framed}

\usepackage{tikz}
\usetikzlibrary{arrows}
\usetikzlibrary{arrows.meta}
\usetikzlibrary{calc}
\usetikzlibrary{decorations.pathreplacing}
\usetikzlibrary{math}
\pgfdeclarelayer{bg}
\pgfsetlayers{bg,main}

\usepackage{multirow}
\usepackage{makecell}

\newenvironment{tightcenter}
  {\parskip=0pt\par\nopagebreak\centering}
  {\par\noindent\ignorespacesafterend}

\newlength{\RoundedBoxWidth}
\newsavebox{\GrayRoundedBox}
\newenvironment{GrayBox}[1]{\setlength{\RoundedBoxWidth}{\linewidth-4.5ex}
\def\boxheading{#1}
\begin{lrbox}{\GrayRoundedBox}
\begin{minipage}{\RoundedBoxWidth}}{\end{minipage}
\end{lrbox}\begin{tightcenter}\begin{tikzpicture}\node(Text)[draw=black!20,fill=white,rounded corners,inner sep=2ex,text width=\RoundedBoxWidth]{\usebox{\GrayRoundedBox}};
\coordinate(x) at (current bounding box.north west);
\node [draw=white,rectangle,inner sep=3pt,anchor=north west,fill=white]
at ($(x)+(10.5pt,.75em)$) {\boxheading};
\end{tikzpicture}
\end{tightcenter}\vspace{0pt}\ignorespacesafterend
}

\newenvironment{problembox}[1]{\noindent\ignorespaces \FrameSep=6pt\parindent=0pt\vspace*{.5em}
\begin{GrayBox}{\textsc{#1}}\newcommand\Input{Input:}\newcommand\Prob{Problem:}\begin{tabular*}
    {\columnwidth}
    {@{\hspace{-0.5em}} >{\itshape} p{4em} p{0.8\columnwidth} @{}}
}{
\end{tabular*}\end{GrayBox}\vspace*{-1.0em}
\ignorespacesafterend
}
 \newcommand{\noopsort}[2]{#2}

\bibliographystyle{plain}

\title{Planar Network Diversion}

\author{Matthias Bentert\thanks{University of Bergen, \texttt{Matthias.Bentert@uib.no}}
\and
Pål Grønås Drange\thanks{University of Bergen, \texttt{Pal.Drange@uib.no}}
\and
Fedor V.\ Fomin\thanks{University of Bergen, \texttt{Fedor.Fomin@uib.no}}
\and
Steinar Simonnes\thanks{University of Bergen}
}

\newtheorem{theorem}{Theorem}

\newtheorem{lemma}{Lemma}
\newtheorem{proposition}{Proposition}
\newtheorem{observation}{Observation}
\newtheorem{corollary}{Corollary}

\newcommand{\pname}[1]{\textsc{#1}}

\DeclareMathOperator{\cut}{cut}

\begin{document}

\maketitle

\begin{abstract}
    \textsc{Network Diversion} is a graph problem that has been extensively studied in both the network-analysis and operations-research communities as a measure of how robust a network is against adversarial disruption.
    This problem is especially well motivated in transportation networks, which are often assumed to be planar.
    Motivated by this and recent theoretical advances for \textsc{Network Diversion} on planar input graphs, we develop a fast~$O(n \log n)$ time algorithm and present a practical implementation of this algorithm that is able to solve instances with millions of vertices in a matter of seconds.
\end{abstract}

\section{Introduction}
\label{sec:intro}
The \pname{Network Diversion} problem is a variant of the classic \pname{Minimum Cut} problem, which models network vulnerability to adversarial disruption.
In this problem, we are given two vertices~$s$ and~$t$ and an edge~$b$ in a graph. The task is to identify at most~$k$ edges such that, after their removal, every $s$-$t$ path is forced to pass through~$b$, while ensuring that at least one such path remains.
Equivalently, the problem can be reformulated as finding a minimal~$s$-$t$ cut of size at most~$k+1$ that includes the edge~$b$.
Although \pname{Network Diversion} may initially appear to be a minor variation of \pname{Minimum Cut}, the two problems differ significantly in terms of computational complexity.
While \pname{Minimum Cut} is solvable in polynomial time, \pname{Network Diversion} is NP-hard on directed graphs. For undirected graphs, however, it remains an open question whether \pname{Network Diversion} is NP-hard or admits a polynomial-time algorithm.

In the context of transportation networks, which are often modeled as planar, undirected, and weighted graphs, the ability to compute diverse \( s \)-\( t \)-cuts is crucial for assessing network resilience and planning infrastructure. Unlike traditional minimum-cut computations, the \pname{Network Diversion} framework allows exploring different minimal cuts by enforcing the inclusion of a specific edge~\( b \).
We show empirically that this allows us to find a set of minimal cuts that is quite diverse.
Given that planar graphs have $O(n)$ edges, we can efficiently enumerate a diverse set of potential $s$-$t$-cuts in \( O(n^2 \log n) \) time, providing a tractable means to identify critical bottlenecks.  (See \Cref{fig:diverse-cuts} in the appendix for an illustration.) Moreover, when edge weights represent failure probabilities, the problem of determining whether a given edge~\( b \) is likely to become an \( s \)-\( t \)-bridge reduces to a minimum-weight cut problem in a transformed graph with logarithmic weights. This formulation allows for efficient computation of the probability that all alternative paths fail, making it a valuable tool for transportation planning, security analysis, and network robustness evaluation.

\bigskip

\textsc{Network Diversion} has garnered attention from both the operations-research and network-analysis communities~\cite{cintron-arias2001networkdiversion,
curet2001networkdiversion,
erken2002branchandboundalgorithm,
lee2019combinatorialbenders,
phillips1993networkinhibition,
wood1993deterministicnetwork}.
It measures the vulnerability of a network against flow-manipulation by sabotaging connections in the network and is formally defined as follows.
\begin{problembox}{Planar Network Diversion}
  \Input & An undirected planar graph $G=(V,E)$, an edge-weight function denoting the destruction cost~$w \colon E \rightarrow \mathbb{R}_{\geq 0}$, two vertices~$s$ and~$t$, an edge~$b$, and a budget~$k \in \mathbb{R}_{\geq 0}$.\\
  \Prob  & Does there exist a set~$F \subseteq E$ of edges of total weight at most~$k$ such that $b$ is an~$s$-$t$-bridge in $G-F$?\\
\end{problembox}

In particular, if there is a solution to the problem with small~$k$, then the network is vulnerable as an adversary can damage only few edges to divert all flow or traffic between~$s$ and~$t$ in the network to a single target connection~$b$ (which can be chosen to be particularly susceptible within the network).
Suppose an adversary wants to attack a communication network by forcing data to be rerouted through specific links.
The problem is to determine the smallest number of edges that need to be removed to successfully carry out this attack, providing a measure of the network's vulnerability.

The problem is particularly relevant to
transportation networks and has been extensively studied in the context
of planar graphs.  Cullenbine et al.~\cite{cullenbine2011newresults}
provided a polynomial-time algorithm for \pname{Network Diversion} on
planar graphs under the condition that both terminals,~$s$ and~$t$, are
located on the outer face~\cite{cullenbine2013theoreticalcomputational}.
They raised the question of whether this algorithm could be extended to
arbitrary planar graphs, which has been answered in the
affirmative by Bentert et al.~\cite{bentert2024twosets}.
We give a new algorithm here which is deterministic, conceptually simpler, and faster.

\subsection{Our Results}

\noindent
We present the first deterministic algorithm for \pname{Network Diversion} on weighted planar graphs that runs in truly polynomial time. Specifically, it achieves a running time in~$O(n \log n)$ with only small hidden constants, making it highly efficient in practice.

Our algorithm is, as previous algorithms, based on the correspondence between cuts and
cycles in dual graphs.  Specifically, we use this relationship to find a
path from the face on the left-hand side of edge~\(b\) to its right-hand
side, which is completed to a cycle by adding the dual edge~$b^\star$.  This path
ensures that vertex~\(s\) is positioned on one side of the cycle and
vertex~\(t\) on the opposite side, achieved by identifying an odd-length
path~\cite{derigs1985efficientdijkstralike,lapaugh1984evenpathproblem}
in a transformed version of the dual graph.
Using methods to find the least expensive odd-length paths in weighted graphs, we efficiently determine the most cost-effective minimal cut that includes \(b\).
This approach has potential implications for designing other
types of cuts in planar graphs, applicable even when edges are labeled
with elements from various
groups~\cite{iwata2022findingshortest,kobayashi2017findingshortest}.

\smallskip

As part of our algorithm, we implement Derig's algorithm to find the least expensive odd-length $s$-$t$-path and conduct experiments with two different data structures.  We believe this implementation to be of independent interest.

\bigskip
\noindent
\textbf{Cut--Uncut Problems.}
\pname{Network Diversion} belongs to the broader category of \emph{cut--uncut problems}~\cite{gray2012removinglocal,vanthof2009partitioninggraphs}.
There, the goal is to separate specific terminals while maintaining
connectivity between designated terminal pairs.  These problems are notoriously hard and usually NP-hard.
In the context of \pname{Network Diversion}, we are able to leverage the small number of terminals and the planarity of the input graph to derive a polynomial-time algorithm.
In a recent paper, Bentert et al.~\cite{bentert2024twosets} showed that
\pname{Two--Sets Cut--Uncut} is fixed-parameter tractable in
the size of~$S$ and~$T$ on planar graphs.
In this problem, we are given a planar graph, and two sets~$S$ and~$T$ of vertices, and the goal is to find a small set of edges that separates all
of~$S$ from all of~$T$, while maintaining connectivity within $S$ and~$T$.
This result immediately implies a polynomial-time algorithm for \textsc{Network Diversion} on planar input graphs as shown next.
Let~$b = \{b_s,b_t\}$.
Consider a solution~$F$ and the graph~$G' = G - (F \cup \{b\})$.
Note that exactly one endpoint of~$b$ is in the same connected component of~$G'$ as~$s$ and the other endpoint is in the connected component of~$t$.
We can guess which endpoint is in the connected component of~$s$ (let us assume without loss of generality~$b_s$) and then solve the instance of \textsc{Two--Sets Cut--Uncut} with~$S=\{s,b_s\}$ and~$T=\{t,b_t\}$.
We mention in passing that their algorithm cannot handle edge weights and that their algorithm is randomized.
The algorithm by Bentert et al.\ also employs an algebraic approach, which differs fundamentally from the graph search algorithm we present herein and that we believe to be less competitive.
We overcome both of these issues and present a simple, deterministic algorithm that can handle edge weights.
Our algorithm is implemented in Rust and available at an anonymized repository~\cite{code2024github}.

\subsection{Background}

In 1993, Phillips~\cite{phillips1993networkinhibition} introduced
\textsc{Network Inhibition} as a flow-interdiction problem as follows.
The input consists of a graph~$G$, two vertices~$s$ and~$t$, a budget~$k$, edge capacities~$c$ and \emph{edge destruction costs}~$d$ (i.e., each edge $e$ has a capacity $c_e$ and a cost~$d_e$ for destroying~$e$).
One can now assign each edge~$e$ a value~$\alpha_e$.
This means that one pays $\alpha d_e$ to reduce the capacity of~$e$ to~$(1-\alpha) c_e$.
The task is to assign numbers in such a way that the maximum flow between~$s$ and~$t$ is minimized (with respect to the new edge capacities) while the cost of all modifications does not exceed the budget~$k$.
Phillips proved NP-completeness on subcubic graphs, and weak NP-completeness for series-parallel graphs, planar graphs, bandwidth-3 graphs, and more~\cite{phillips1993networkinhibition}.
They complemented their findings by providing polynomial-time algorithms for outerplanar graphs and an FPTAS for planar graphs.

That problem was further studied by
Wood~\cite{wood1993deterministicnetwork}, who gave several other NP-completeness results and implemented the problem as an ILP and considered different LP relaxations.
We refer to the doctoral thesis of
Kallemyn~\cite{kallemyn2015modelingnetwork} for a thorough survey on the
subject of modeling network interdiction tasks.

Curet~\cite{curet2001networkdiversion} introduced the problem \textsc{Network Diversion} on directed graphs.
They showed that the problem is NP-complete and implemented an ILP for
solving the problem on real-world instances.
Cintron-Arias et al.~\cite{cintron-arias2001networkdiversion}
used the Lagrangean relaxation
for integer programming to implement ILP-based
heuristic algorithms for real-world networks.

A decade later, Cullenbine, Wood, and
Newman~\cite{cullenbine2013theoreticalcomputational} studied the problem
on undirected graphs.
They showed how to solve \textsc{Network Diversion} on so-called $s$-$t$-planar graphs in polynomial time in both the directed and undirected setting.
An \emph{$s$-$t$-planar graph} is a planar graph where~$s$ and~$t$ belong to a common face.
Their algorithm is based on the following observation:
If we find a path~$P$ in~$G$ that contains $b$\footnote{We can find an $s$-$t$-path containing~$b$ using an algorithm for two-disjoint paths, or using the Odd Path algorithm (see Section~\ref{sec:anapplication}).}, then a cycle in the dual, containing~$b^\star$ separates~$s$ from~$t$ if and only if $P$ is \emph{crossed} an odd number of times.  They refer to $P$ as a \emph{reference path}. We use the same idea and explain it in details in the next section.

\begin{proposition}[Cullenbine et al.~\cite{cullenbine2013theoreticalcomputational}]
  A solution to \textsc{Network Diversion} on a
  planar graph~$G$ corresponds directly to a minimum-weight, simple,
  odd-parity cycle $E^\star_C \supseteq \{ {b}^{*} \}$ in the dual graph of~$G$, where parity is measured with respect to a simple $(s,t)$-reference path in~$G$ that
  contains $b$.
\end{proposition}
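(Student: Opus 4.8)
The plan is to establish the claimed correspondence in both directions, grounded in the classical planar duality between $s$-$t$-cuts in $G$ and cycles in the dual graph $G^\star$. First I would fix a simple $(s,t)$-reference path $P$ in $G$ that passes through the edge $b$; such a path exists and can be found as indicated in the footnote. The central object is the standard bijection between edge subsets of $G$ and edge subsets of $G^\star$ sending $e \mapsto e^\star$. Under this bijection, inclusion-minimal $s$-$t$-cuts of $G$ correspond to simple cycles of $G^\star$ — this is the well-known Whitney-type duality on which both Cullenbine et al.\ and our approach rest — so it suffices to characterize \emph{which} such cycles correspond to valid \textsc{Network Diversion} solutions, namely those that (i) contain $b^\star$ and (ii) keep $b$ as a genuine $s$-$t$-bridge rather than merely part of a larger cut.

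The key technical step is the parity characterization. I would argue that a simple cycle $E^\star_C$ in $G^\star$ separates $s$ from $t$ in $G$ if and only if the corresponding edge set $E_C \subseteq E$ crosses the reference path $P$ an odd number of times. The intuition is topological: drawing $P$ as a curve from $s$ to $t$ in the plane and $E^\star_C$ as a closed curve, $s$ and $t$ lie on opposite sides of the closed curve exactly when the curve has odd winding with respect to the two regions $P$ delimits, which manifests combinatorially as $P$ and the cut crossing an odd number of times. Requiring $b^\star \in E^\star_C$ forces the dual edge of $b$ into the cut, so that $b \in E_C$ and hence $b$ lies on the cut in $G$; combined with the odd-parity (separation) condition, this is precisely the statement that removing the remaining edges $E_C \setminus \{b\}$ leaves $b$ as an $s$-$t$-bridge. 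The minimum-weight requirement transfers directly, since the dual weight of a cycle equals the total weight $\sum_{e \in E_C} w(e)$ of the corresponding cut.

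For the converse direction, I would take an arbitrary solution $F$ to \textsc{Network Diversion}, observe that $b$ being an $s$-$t$-bridge in $G - F$ means $F \cup \{b\}$ contains an inclusion-minimal $s$-$t$-cut containing $b$, pass to the corresponding simple dual cycle, verify it contains $b^\star$, and check that the odd-parity condition with respect to $P$ holds because the cut genuinely separates $s$ from $t$. Minimality of $F$ then matches minimality of the cycle. Putting the two directions together yields the claimed one-to-one correspondence between optimal solutions and minimum-weight simple odd-parity cycles through $b^\star$.

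The main obstacle I expect is pinning down the parity argument rigorously. The informal ``crossing an odd number of times'' statement is geometrically compelling but needs a careful combinatorial formulation: one must define what a crossing between $E_C$ and $P$ means at the level of the planar embedding (e.g.\ via the rotation system or via the Jordan-curve separation induced by the dual cycle), and then prove the parity equals the indicator of $s$ and $t$ lying in different components of $G - E_C$. Handling the edge $b$ itself — which lies \emph{on} $P$ — requires care so that its contribution to the parity count is treated consistently, and one must ensure that the reference path being simple and the cycle being simple rule out degenerate tangencies. Since the earlier \emph{Proposition} is attributed to Cullenbine et al., I would lean on their established crossing-parity framework for this delicate step rather than reproving the planar-duality foundations from scratch.
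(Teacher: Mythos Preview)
The paper does not actually prove this proposition: it is stated with attribution to Cullenbine et al.\ and immediately followed by a quotation from their paper, with no proof given here. Your reconstruction is a reasonable sketch of the argument one would expect in the cited source, and the ingredients you identify---the Whitney-type correspondence between minimal cuts and simple dual cycles, and the crossing-parity criterion for separation---are exactly the ones the present paper also invokes (as the cited Theorem from Diestel and the subsequent Observation and Corollary).

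One point worth flagging, since it bears on how this proposition relates to the rest of the paper: the proposition uses a reference path that \emph{contains} $b$, whereas the algorithm the paper develops deliberately uses a reference path in $G-\{b\}$, i.e.\ one that \emph{avoids} $b$. The paper's own correctness proof (for its main theorem) therefore does not go through the proposition at all; it applies the parity criterion directly to a $b$-avoiding path and searches for a simple odd-parity $u$--$v$ path in $G^\star - \{b^\star\}$, which is then closed up with $b^\star$. So while your proof sketch is fine as a justification of the cited proposition, be aware that the paper's contribution is precisely to sidestep the difficulty Cullenbine et al.\ raise (that a na\"ive shortest odd cycle through $b^\star$ need not be simple) by reformulating the search as a shortest odd \emph{path} problem, for which Derigs' algorithm guarantees simplicity.

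Your acknowledged gap---the rigorous treatment of the parity argument, especially the contribution of $b$ itself when it lies on the reference path---is real but standard; the paper handles it implicitly by citing the relevant folklore (its Observation and \Cref{cor:oddcrossing}) rather than spelling it out.
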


They also note the obstacle for solving \textsc{Network Diversion} on planar graphs.

\begin{quotation}
  \it
  {\Large``}It is easy to identify a reference path as required by the proposition if one exists.
  And then, using shortest-path techniques, it is easy to find a minimum-weight odd-parity cycle $E^\star_S$ in~$G^\star$
  such that $b^\star \in E^\star_S$.  Unfortunately, such an approach does not lead to a general, efficient method for solving \textsc{Network Diversion} on undirected planar graphs, because $E^\star_S$ may not be simple, and because minimality of the corresponding $(s,t)$-cut demands a simple cycle.{\Large''}
\end{quotation}

\noindent
We will demonstrate in \Cref{sec:network-diversion} how we circumvent this obstacle.
We note that the question of whether \textsc{Network Diversion} on general undirected graphs is polynomial-time solvable or NP-hard remains open.

Cullenbine et al.~\cite{cullenbine2013theoreticalcomputational} also ran experiments on both generated and
real-world data sets using a Mixed-Integer-Linear-Programming (MILP) approach.
We compare our implementation to theirs and find that our algorithm is faster by orders of magnitude.

\section{Preliminaries and Notation}
\label{sec:prelim}

Usually, when working with dual graphs, one needs to consider multigraphs.
However, we will only be computing simple paths in the dual graph and hence we can keep only the lowest-cost edge of a set of parallel edges.
For the same reason, we can ignore any self-loops in the dual graph and only consider simple weighted
graphs.
We refer to the text books by Agnarsson and Greenlaw~\cite{agnarsson2006graphtheory} and by Diestel~\cite{diestel2016graphtheory} for an introduction to general graph theory, planar graphs, and other graph theoretic notation.

We refer to a planar graph with a specific embedding as a \emph{plane graph}, and whenever we talk about \emph{dual graphs}, we mean duals
of plane graphs.
The dual is defined as usual, and for a plane graph
$G$, we refer to the dual graph as $G^\star$.  Since every edge
$e \in E(G)$ corresponds to a dual edge in $G^\star$, we will refer to
the dual edge of $e$ as~$e^\star$.  Similarly, if $S \subseteq E$, then
we write $S^\star$ for~$\{e^\star \mid e \in S\}$. For weighted graphs, we let $w(e^\star) = w(e)$.

A set of edges $C$ is called a \emph{cut} if there exists a set~$S \subseteq V(G)$ such that $C$ contains exactly those edges with exactly one endpoint in $S$.
A cut $C = \cut(S)$ is an $s$-$t$-cut if $s \in S$ and $t \notin S$ or vice versa.
A cut $C$ is \emph{minimal} if no proper subset of $C$ is a cut.
The \emph{weight} of a cut (or a path/cycle) is the sum of the weights of its edges.
In this work, we only consider \emph{minimal non-empty cuts}.

\subsection{Derigs' Shortest Odd Path}
\label{sec:derigs}

The main subroutine of our algorithm is an algorithm that finds a
shortest odd $s$-$t$-path.  Here, \emph{shortest} means lowest cost, and
\emph{odd} means that we need an odd number of edges.  There are several
algorithms solving this problem, and in graphs without weights, there is
an $O(m)$ algorithm due to Lapaugh and
Papadimitriou~\cite{lapaugh1984evenpathproblem}\footnote{This algorithm
  finds a shortest even path, but finding an odd path is simple by
  adding a pendant vertex $s'$ to $s$ and search for an even
  $s'$-$t$-path instead.}.  We note in passing that on directed graphs,
checking the existence of an odd path is
NP-complete~\cite{lapaugh1984evenpathproblem}.

For the weighted version, we use Derigs' algorithm for shortest odd
path~\cite{derigs1985efficientdijkstralike}, which is a Dijkstra-like
algorithm, but with an additional blossom step.  Both Derigs' and
Lapaugh and Padadimitriou's algorithms rely on finding augmenting paths
in a transformed graph.

\begin{theorem}[Derigs' algorithm~\cite{derigs1985efficientdijkstralike}]
    \label{thm:derig}
  Given a weighted graph~$G$ and two vertices~$s$ and~$t$, we can in
  $O(m \log n)$ time find a cheapest \emph{odd} $s$-$t$-path, or
  correctly conclude that none exist.
\end{theorem}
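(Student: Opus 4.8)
The plan is to reduce the shortest odd $s$-$t$-path problem to a minimum-weight perfect matching problem on an auxiliary graph, and then run a Dijkstra-like primal–dual algorithm on that structure. The classical way to handle parity constraints on paths is through the theory of $T$-joins and blossoms: an odd $s$-$t$-path can be captured by the matching structure because augmenting paths in a suitably transformed graph alternate between matched and unmatched edges, and the parity of the path length is controlled by how the augmentation interacts with odd cycles (blossoms). So first I would set up the transformed graph $H$ in which a cheapest odd $s$-$t$-path in $G$ corresponds to a minimum-weight augmenting structure.

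**First I would** describe the gadget construction. A standard approach is to take two copies of $G$ (an "even" copy and an "odd" copy) or, following the matching formulation, to build a graph where finding the cheapest odd path becomes finding a minimum-weight alternating path from $s$ to $t$ with respect to some reference matching. The key invariant is that traversing an edge of $G$ flips parity, so the desired path is one whose augmentation changes the parity state from even to odd. I would then argue that a cheapest odd $s$-$t$-path in $G$ exists if and only if the corresponding augmenting path exists in $H$, and that their weights coincide. This establishes correctness of the reduction.

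**The main obstacle** is the blossom step and its interaction with the running time. Unlike plain Dijkstra, where distances grow monotonically along a single shortest-path tree, the presence of odd cycles means that naive relaxation can be fooled: a vertex may be reachable more cheaply only by routing through a blossom that must be contracted. Derigs' insight is that one can run a Dijkstra-like scan maintaining a priority queue keyed by tentative distances, but whenever two search fronts meet in a way that would close an odd cycle, one contracts the blossom and continues, reusing the dual (potential) values to keep edge reduced-costs nonnegative. The careful part of the argument is proving that each blossom is formed and processed at most a bounded number of times, and that the total work of contractions and expansions is dominated by the priority-queue operations. With a binary-heap implementation of the priority queue over $O(m)$ edge-relaxations and $O(n)$ vertex-extractions, each costing $O(\log n)$, the total running time is $O(m \log n)$.

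**To finish**, I would verify the two output cases: if the augmenting search reaches $t$, the path recovered by tracing predecessors (and expanding any contracted blossoms along the way) is a genuine odd $s$-$t$-path of minimum weight, by the primal–dual optimality certificate given by the final potentials; and if the search exhausts the queue without reaching $t$, then no augmenting path exists, which by the reduction means no odd $s$-$t$-path exists in $G$. The correctness of the ``none exist'' conclusion follows from the standard matching-duality argument that an exhausted alternating search certifies infeasibility. Since the whole procedure is a single Dijkstra-like pass augmented with blossom handling, the claimed $O(m \log n)$ bound follows.
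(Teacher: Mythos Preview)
The paper does not prove this theorem; it is stated as a cited result from Derigs~\cite{derigs1985efficientdijkstralike} and only a brief sketch of the main idea is given afterward. That sketch describes precisely the construction you outline: form two copies of $G$, identify each vertex with its mirror, take the matching $M$ consisting of all mirror pairs, delete $s'$ and $t$, and argue that an even (hence, after a parity shift, odd) $s$-$t$-path in $G$ corresponds to an $M$-alternating path in the doubled graph. The paper also remarks, in the paragraph introducing the theorem, that Derigs' procedure is ``a Dijkstra-like algorithm, but with an additional blossom step,'' which is exactly the mechanism you invoke.

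So your proposal is consistent with both Derigs' original argument and the paper's summary of it, and in fact your sketch is more detailed than anything the paper provides. There is nothing to correct: the reduction to alternating paths via the two-copy gadget, the blossom contraction when odd cycles arise, and the $O(m\log n)$ accounting via heap-based priority-queue operations are the right ingredients. If anything, your reference to $T$-joins is a slight detour (Derigs works directly with alternating paths rather than the $T$-join formalism), but the two viewpoints are equivalent here and do not affect correctness.
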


\noindent
The main idea behind Derigs' algorithm is the following.  Let $G$ be an input graph, with two designated vertices~$s$ and~$t$.  We want to find an \emph{even} $s$-$t$-path.  We create a new  graph~$G'$ consisting of two copies of~$G$, and we let $v'$ denote the copy of $v$, and refer to $v'$ as the mirror vertex of $v$.  Now, add all edges $v,v'$ to $G'$ and finally delete $s'$ and $t$ from $G'$.
Create a matching $M = \{(v,v') \mid v \in V(G)\}$.  Note that neither~$s$ nor~$t'$ are in any edges of $M$.

Given a graph and a matching, an \emph{alternating path} is a path with an odd number of edges in the graph in which every edge of even index is in the matching and the other edges are not in the matching.  It can be shown that $G$ has an even $s$-$t$-path if and only if $G'$ and $M$ has an alternating path.

\subsection{An Application of Shortest Odd Path}
\label{sec:anapplication}

Before explaining the algorithm for \pname{Network Diversion}, which
builds on shortest odd path, we give another useful application of
shortest odd path, that we call \pname{Detour Path}. This problem is defined as follows.

\begin{problembox}{Detour Path}
  \Input & An edge-weighted graph $G$, two vertices $s$ and $t$, and an edge~$b$.\\
  \Prob  & Among all $s$-$t$-path that use~$b$, find a shortest one (or conclude that no such path exists).
\end{problembox}

\noindent
In \pname{Detour Path}, we are given an edge-weighted graph~$G$, two vertices~$s$
and~$t$, and an edge~$b$, and we are asked to find a shortest path from~$s$ to~$t$ that uses~$b$.  Since we are asking for a path rather
than a walk, we need to find two disjoint paths---one from~$s$ to one of
the endpoints of~$b$ and one from~$t$ to the other endpoint of~$b$.
This can be solved by running two instances of \pname{Min-Sum Two Disjoint Paths}.
However, we can also solve the problem faster by using shortest odd path as follows.
On input~$(G,s,t,b)$, create the graph~$G'$ by
subdividing every edge of~$G$ except~$b$.
If $e_1, e_2$ are the subdivision edges that replace~$e$, let
$w(e_1) = w(e_2) = \frac{1}{2}w(e)$.
See \Cref{figure:subdividing-detours} for an illustration.
Now, it is clear that in~$G'$, any
path between vertices in the original graph~$G$ is odd if and only if the path contains~$b$.
Moreover, the length of paths are preserved.
\begin{figure}[t]
  \centering
  \begin{subfigure}{.43\textwidth}
    \centering
    \scalebox{0.5}{
      \begin{tikzpicture}
        \tikzstyle{every node}=[circle, fill=lightgray, draw=black, inner sep=2pt, minimum size=1.5em, font=\footnotesize, text=black]
        \tikzstyle{edge}=[gray, line width=1.5mm]

        \node (s) at (0,0) {$s$};
        \node (a) at (1.5,1.5) {};
        \node (b) at (3,0) {};
        \node (c) at (1.5,-1.5) {};
        \node (d) at (4.5,-1.5) {};
        \node (e) at (6,0) {};
        \node (f) at (4.5,1.5) {};
        \node (t) at (7.5,1.5) {$t$};

        \draw[edge] (s) -- (a) -- (b) -- (e);
        \draw[edge] (c) -- (b) -- (d) -- (e) -- (f) -- (t);
        \draw[edge] (a) -- (f);

        \tikzstyle{edge}=[myred, line width=1.5mm]
        \draw[edge] (c) -- (d);
      \end{tikzpicture}
    }
    \caption{An instance of \textsc{Shortest Detour Path} with the detour edge marked in red.}
    \label{figure:detour}
  \end{subfigure}\hfill \begin{subfigure}{.43\textwidth}
    \centering
    \scalebox{0.5}{
      \begin{tikzpicture}
        \tikzstyle{every node}=[circle, fill=lightgray, draw=black, inner sep=2pt, minimum size=1.5em, font=\footnotesize, text=black]
        \tikzstyle{edge}=[gray, line width=1.5mm]

        \node (s) at (0,0) {$s$};
        \node (a) at (1.5,1.5) {};
        \node (b) at (3,0) {};
        \node (c) at (1.5,-1.5) {};
        \node (d) at (4.5,-1.5) {};
        \node (e) at (6,0) {};
        \node (f) at (4.5,1.5) {};
        \node (t) at (7.5,1.5) {$t$};

        \draw[edge] (s) -- (a) -- (b) -- (e);
        \draw[edge] (c) -- (b) -- (d) -- (e) -- (f) -- (t);
        \draw[edge] (a) -- (f);

        \tikzstyle{edge}=[myred, line width=1.5mm]
        \draw[edge] (c) -- (d);

        \tikzstyle{every node}=[circle, fill=lightgray, draw=black, inner sep=2pt, minimum size=1em, font=\footnotesize, text=black]
        \node (sa) at (0.75,0.75) {};
        \node (af) at (3,1.5) {};
        \node (ab) at (2.25,0.75) {};
        \node (bc) at (2.25,-0.75) {};
        \node (bd) at (3.75,-0.75) {};
        \node (be) at (4.5,0) {};
        \node (de) at (5.25,-0.75) {};
        \node (ef) at (5.25,0.75) {};
        \node (ft) at (6,1.5) {};
      \end{tikzpicture}
    }
    \caption{Instance of \textsc{Shortest Odd Path}:  All edges but the detour edge are subdivided.}
    \label{figure:subdivided-detour}
  \end{subfigure}\caption{\textsc{Shortest Detour Path} reduced to \textsc{Shortest Odd Path} by subdividing all edges except the detour.}
  \label{figure:subdividing-detours}
\end{figure}
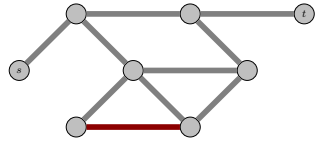
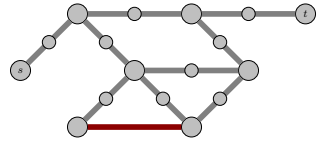

\section{Network Diversion on Planar Graphs}
\label{sec:network-diversion}
In this section, we present our algorithm for \textsc{Network Diversion} on planar input graphs.
We start with a collection of relevant results from the known literature.

First, we use the following strong connection between cuts in planar graphs and cycles in their dual graph.
\begin{theorem}[{\cite[Proposition 4.6.1]{diestel2016graphtheory}}]
  Minimal cuts in the graph corresponds to simple cycles in the dual graph.
\end{theorem}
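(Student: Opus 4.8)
The plan is to prove the biconditional directly, in both directions, using only the combinatorial edge bijection $e \mapsto e^\star$ together with the topological Jordan curve theorem. Throughout I would fix a connected plane graph $G$ (the general case reduces to this by treating each component separately, and parallel-edge or self-loop subtleties are harmless since, as noted in \Cref{sec:prelim}, we keep only the cheapest of parallel dual edges). The geometric facts I rely on are standard: in the natural simultaneous embedding, each primal edge $e$ and its dual $e^\star$ cross exactly once, $e^\star$ joins the two faces of $G$ incident to $e$, the vertices of $G^\star$ are exactly the faces of $G$, and the faces of $G^\star$ are exactly the vertices of $G$, with each vertex of $G$ sitting inside its corresponding dual face.

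For the direction from simple cycles to minimal cuts, I would start with a simple cycle $D$ in $G^\star$ and view it as a closed Jordan curve in the plane. By the Jordan curve theorem it separates the plane into an interior and an exterior, and since the faces of $G^\star$ are the vertices of $G$, this partitions $V(G)$ into the set $S$ of vertices lying inside $D$ and the set $V(G)\setminus S$ lying outside. The key observation is that a primal edge $e$ has one endpoint in $S$ and the other outside precisely when its dual $e^\star$ crosses $D$, i.e.\ when $e^\star \in D$; hence the primal edge set dual to $D$ is exactly $\cut(S)$. To see that this cut is minimal, I would show both $S$ and $V(G)\setminus S$ induce connected subgraphs of $G$: the faces of $G$ lying strictly inside (resp.\ outside) $D$ form a single region because $D$ is a simple closed curve, and connectivity of both sides is equivalent to minimality of the cut.

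For the converse, I would take a minimal cut $C = \cut(S)$ and invoke the bond characterization: a cut is minimal exactly when both $G[S]$ and $G[V(G)\setminus S]$ are connected. Dualizing, the edges of $C^\star$ form the topological boundary between the region occupied by $S$ and that occupied by $V(G)\setminus S$; connectivity of each side forces this boundary to close up into a single loop that meets every dual vertex at most once, i.e.\ a simple cycle. Once this biconditional is established, the companion correspondence — simple cycles in $G$ versus minimal cuts in $G^\star$ — follows immediately from $(G^\star)^\star = G$, so both classical forms of planar duality are obtained together and neither direction must be argued twice.

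The main obstacle I expect is not the Jordan-curve partitioning, which is intuitive, but the careful two-way link between minimality of the cut and simplicity of the cycle: one must show that a repeated dual vertex corresponds exactly to a "pinch point" at which one side of the cut disconnects, and conversely that connectivity of both sides rules out any such repetition. Making this precise requires tracking how the cyclic order of edges around a face of $G$ translates into the order in which $D$ passes through the corresponding dual vertex, and handling the degenerate cases where $G[S]$ or its complement is a single vertex, or where $G$ is not $2$-connected. These bookkeeping details, together with the reduction from arbitrary plane graphs to connected ones, are where the real work lies.
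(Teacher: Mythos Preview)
The paper does not prove this statement at all: it is quoted verbatim as Proposition~4.6.1 from Diestel's textbook and used as a black box, so there is no ``paper's own proof'' to compare against. Your sketch is a reasonable outline of the standard argument (essentially what one finds in Diestel), but since the paper simply cites the result, any correct proof you supply goes beyond what the paper itself does.
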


\noindent
Second, it was previously observed (for example by Bentert et al.~\cite{bentert2024twosets}) that a minimal cut separates two vertices if and only if the corresponding cycle crosses a path between those two vertices an odd number of times.
\begin{observation}
  Let $P$ be an $s$-$t$-path in $G$.  A cycle~$C^\star$ in the dual
  graph corresponds to a minimal $s$-$t$-cut if and only if it crosses~$P$ an odd number of times.  Moreover, if $C^\star$ is a minimum-cost
  cycle satisfying the above criteria, then~$C$ is a minimum-weight minimal $s$-$t$-cut.
\end{observation}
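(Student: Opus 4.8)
The plan is to establish both directions of the biconditional using the standard duality between cuts and cycles, combined with a parity argument based on the topology of the plane. Throughout, I fix the embedding of $G$ and work with the correspondence from the previous theorem: a minimal cut $C$ in $G$ corresponds to a simple cycle $C^\star$ in $G^\star$. The key geometric fact I would invoke is that a simple cycle $C^\star$ in the plane is a closed Jordan curve, so it partitions the plane into an interior and an exterior region. Since each vertex of $G$ lies inside a unique face of the dual embedding, every vertex of $G$ falls strictly on one side of $C^\star$ (the interior or the exterior), and an edge of $G$ is cut by $C$ precisely when its two endpoints lie on opposite sides of $C^\star$, equivalently when the edge crosses $C^\star$.

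For the forward direction, I would assume $C^\star$ corresponds to a minimal $s$-$t$-cut $C = \cut(S)$ with $s \in S$, $t \notin S$. Then $s$ and $t$ lie on opposite sides of the Jordan curve $C^\star$. Now I trace the reference path $P$ from $s$ to $t$: since $P$ starts inside $C^\star$ and ends outside (or vice versa), and each time $P$ crosses $C^\star$ it switches sides, the total number of crossings must be odd. This is the elementary parity argument that a curve from the interior to the exterior of a Jordan curve crosses it an odd number of times. For the converse, I would run the same argument in reverse: if $C^\star$ crosses $P$ an odd number of times, then the endpoints $s$ and $t$ must lie on opposite sides of the Jordan curve $C^\star$, so the minimal cut $C$ it induces separates $s$ from $t$, i.e.\ $C$ is a minimal $s$-$t$-cut.

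For the ``moreover'' clause, I would argue by a weight-preserving bijection. Since $w(e^\star) = w(e)$ for every edge, the weight of any dual cycle $C^\star$ equals the weight of the corresponding cut $C$. The first correspondence identifies minimal cuts with simple cycles, and the parity characterization just established identifies the minimal $s$-$t$-cuts with exactly those simple cycles through the dual that cross $P$ an odd number of times. Hence minimizing weight over all simple dual cycles with odd crossing parity is the same optimization as minimizing weight over all minimal $s$-$t$-cuts, and a minimizer on one side maps to a minimizer on the other.

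The main obstacle I anticipate is making the ``crossing'' notion and the side-switching argument fully rigorous, since \emph{crossing} is a topological statement about the embedding rather than a purely combinatorial one. In particular, I must be careful that the count of crossings between $P$ and $C^\star$ is well-defined independent of small perturbations of the drawing, and that a crossing of the embedded path $P$ by the embedded cycle $C^\star$ corresponds exactly to an edge of $P$ whose dual lies in $C^\star$. I would handle this by appealing to the combinatorial definition of crossing via the primal--dual incidence (an edge $e$ of $P$ is crossed by $C^\star$ iff $e^\star \in C^\star$), which reduces the parity of geometric crossings to the parity of $|E(P)^\star \cap E(C^\star)|$, after which the Jordan-curve side-switching argument applies cleanly.
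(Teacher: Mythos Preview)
The paper does not actually supply a proof of this observation; it is stated as a known fact attributed to prior work (Bentert et al.) and immediately followed by its reformulation as \Cref{cor:oddcrossing}. Your argument is a correct and standard way to establish it: the Jordan-curve side-switching parity argument for both directions of the biconditional, together with the weight-preserving bijection $w(e^\star)=w(e)$ for the minimality clause, is exactly the intended reasoning, and your reduction of ``crossing'' to the combinatorial condition $e^\star \in E(C^\star)$ for $e \in E(P)$ is the right way to make the parity count rigorous.
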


\noindent
We use this observation in the following reformulation.

\begin{corollary}
    \label{cor:oddcrossing}
  Let $E_P$ be the edges of any $s$-$t$-path, and $E_P^\star$ the
  corresponding dual edges.  Any cycle $C^\star$ in the dual graph
  corresponds to an $s$-$t$-cut in $G$ if and only if $C^\star$ contains
  an odd number of edges from $E_P^\star$.
\end{corollary}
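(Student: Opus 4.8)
The plan is to reduce the claimed equivalence to a single parity count and then transport that count across the planar duality. By the duality between minimal cuts and simple cycles, a cycle $C^\star$ in $G^\star$ corresponds to an edge set $C$ in $G$ that is a cut, say $C = \cut(S)$ for some $S \subseteq V(G)$. Under the edge bijection $e \leftrightarrow e^\star$ we have $E_P^\star \cap C^\star = (E_P \cap C)^\star$, since a dual edge lies on $C^\star$ exactly when its primal edge lies in $C$; as this map is a bijection, $|E_P^\star \cap C^\star| = |E_P \cap C|$. It therefore suffices to prove that $C$ is an $s$-$t$-cut if and only if $|E_P \cap C|$ is odd.

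For this I would traverse the path $P = (s = v_0, v_1, \dots, v_\ell = t)$ and track, for each vertex, whether it lies in $S$. For a single edge $v_{i-1}v_i$ of $P$, its two endpoints lie on opposite sides of the bipartition $(S, V \setminus S)$ precisely when that edge belongs to $\cut(S) = C$; otherwise both endpoints stay on the same side. Hence the number of indices $i$ at which the path switches sides is exactly $|E_P \cap C|$, and the side of $v_\ell = t$ differs from the side of $v_0 = s$ if and only if this number is odd. Since $C$ is an $s$-$t$-cut exactly when $s$ and $t$ lie on opposite sides of $S$, chaining these equivalences gives: $C$ is an $s$-$t$-cut $\iff |E_P \cap C|$ is odd $\iff |E_P^\star \cap C^\star|$ is odd, which is the claim.

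I expect the only delicate point to be the bookkeeping that identifies the phrase ``$C^\star$ crosses $P$'' from the preceding Observation with the purely combinatorial count $|E_P^\star \cap C^\star|$: one must check that each crossing of the dual cycle with the path sits at a primal edge of $P$ whose dual lies on $C^\star$, and conversely, so that geometric crossings and shared dual edges agree in number. Because the parity argument above makes no appeal to the embedding, I would lead with it as the primary proof and use it to justify the reformulation, rather than resting on a topological notion of crossing. A secondary point worth one sentence is that the corollary is phrased for an arbitrary cycle $C^\star$; the same argument goes through verbatim once one notes that any element of the cycle space of $G^\star$ dualizes to some $\cut(S)$, since the cut space of $G$ is closed under symmetric difference.
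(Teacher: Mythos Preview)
Your argument is correct. The paper itself does not supply an independent proof of this corollary: it is stated immediately after the Observation and introduced merely as ``the following reformulation'', so the paper's implicit argument is simply that ``$C^\star$ crosses $P$'' means exactly ``$C^\star$ contains an edge of $E_P^\star$'', whence the Observation (attributed to~\cite{bentert2024twosets}) yields the corollary verbatim.

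Your route is genuinely different in that it is self-contained and purely combinatorial: you never invoke the Observation or any topological notion of crossing, but instead dualize $C^\star$ to a primal cut $\cut(S)$, use the edge bijection to equate $|E_P^\star\cap C^\star|$ with $|E_P\cap C|$, and then run the side-switching parity count along $P$. This buys you exactly what you anticipate in your final paragraph: the paper's one-line reformulation leans on the reader agreeing that geometric crossings and shared dual edges coincide, whereas your proof sidesteps that identification entirely and would in fact serve as a proof of the Observation itself. The cost is only a few extra lines; the benefit is that nothing hinges on an embedding-dependent notion. Your closing remark about arbitrary elements of the cycle space is a nice addendum but not needed here, since the corollary, like the Observation, is stated for simple cycles $C^\star$, which correspond to \emph{minimal} cuts.
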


\noindent
Finally, using Derigs' algorithm (\Cref{thm:derig}), we obtain the following result as a corollary.

\begin{corollary}
  \label{cor:path-odd-number-f}
  Let~$G$ be a graph,~$F \subseteq E(G)$ be a subset of edges of~$G$,
  and~$s$ and~$t$ be two vertices.  We can find in $O(m \log n)$ time an
  $s$-$t$-path that uses an odd number of edges from~$F$, or correctly
  conclude that none exist.
\end{corollary}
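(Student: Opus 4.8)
The plan is to reduce the problem to finding a shortest odd $s$-$t$-path and then invoke Derigs' algorithm (\Cref{thm:derig}), in direct analogy with the \pname{Detour Path} reduction of \Cref{sec:anapplication}. Concretely, I would build an auxiliary graph $G'$ from $G$ by \emph{subdividing every edge not in $F$} exactly once, while leaving the edges of $F$ untouched. The vertices $s$ and $t$ are original vertices and are therefore unaffected by the subdivision. Edge weights play no role here, since the statement only asks for \emph{some} such path, so one may simply run Derigs on $G'$ with arbitrary (say unit) weights; if one instead wants the cheapest such path, one assigns each of the two subdivision edges replacing $e$ the weight $\tfrac{1}{2} w(e)$, exactly as in the \pname{Detour Path} reduction.

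The key step is to verify that $G'$ faithfully encodes the parity we care about. Subdivision vertices have degree $2$, so every simple $s$-$t$-path in $G'$ passes straight through them and hence corresponds bijectively to a simple $s$-$t$-path $P$ in $G$, and conversely. If $P$ uses $a$ edges of $F$ and $c$ edges outside $F$, then the corresponding path in $G'$ has exactly $a + 2c$ edges, because each edge of $F$ contributes one edge and each subdivided edge contributes two. Thus the length of the path in $G'$ has the same parity as $a$, which means that a path is \emph{odd} in $G'$ if and only if the corresponding path in $G$ uses an odd number of edges from $F$. Running Derigs' algorithm on $G'$ therefore either returns an odd $s$-$t$-path, which I translate back edge by edge into the desired path in $G$, or it certifies that no odd $s$-$t$-path exists in $G'$, which is precisely the claim that no $s$-$t$-path of $G$ uses an odd number of $F$-edges.

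For the running time, note that $G'$ has at most $2m$ edges and at most $n + m$ vertices, so its edge count is $O(m)$ and the logarithm of its vertex count is $O(\log n)$ (using $m \le n^2$). Applying \Cref{thm:derig} to $G'$ then costs $O(m \log n)$, and both the construction of $G'$ and the back-translation of the path are linear in the size of $G'$ and hence negligible.

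The main thing to get right is the correspondence argument, i.e., confirming that simple paths are preserved in both directions under subdivision (degree-$2$ vertices force a path to traverse both halves of a subdivided edge) and that the parity bookkeeping is exactly $a + 2c \equiv a \pmod 2$. This is routine, but it is the one place where a careless argument could silently add or drop a contribution to the count; everything else follows immediately from \Cref{thm:derig}.
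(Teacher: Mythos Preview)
Your proposal is correct and follows essentially the same approach as the paper: subdivide every edge outside $F$, observe that the parity of the resulting path length equals the parity of the number of $F$-edges used, and apply \Cref{thm:derig}. Your write-up is in fact more thorough than the paper's own proof, which records only the construction and the parity congruence without spelling out the path-correspondence or the size bounds on $G'$.
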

\begin{proof}
  Let~$G'$ be the graph that results from subdividing every edge in~$G$
  that does not appear in~$F$.
If~$e_1, e_2$ are the subdivision edges that replace~$e$, let
  $w_1 = \lceil w/2 \rceil$ and $w_2 = \lfloor w/2 \rfloor$.
Now, the length of any path must be congruent (mod 2) to the number of
  edges it contains in~$F$.
\end{proof}

\noindent
Using the known fact that planar graphs are sparse, we obtain the following.

\begin{lemma}[Folklore]
  \label{lem:m-is-n}
  For simple planar graphs, the number~$m$ of edges is upper-bounded by~${3n -6}$,
  where $n$ is the number of vertices.  Hence, $m \in O(n)$ and the
  algorithms above run in~$O(n \log n)$~time.
\end{lemma}

\subsection{Algorithm for Planar Network Diversion}

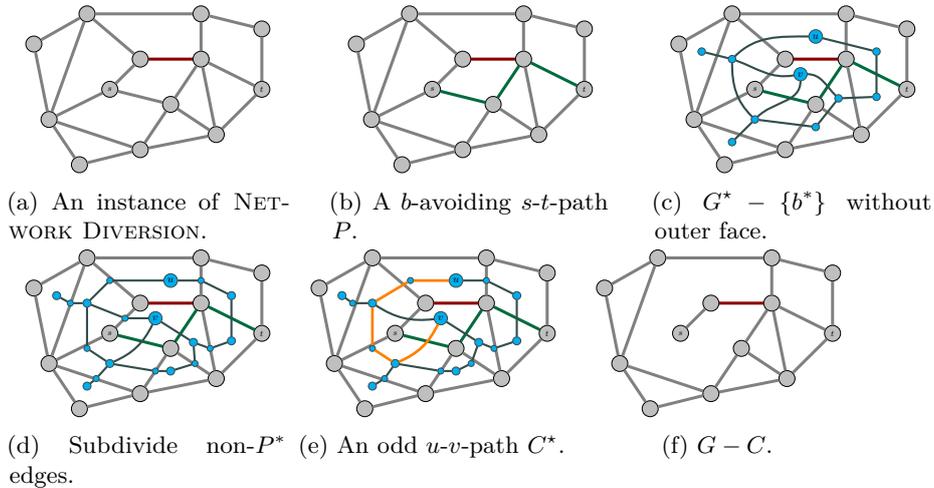
\begin{figure}[t]
  \begin{subfigure}[t]{.3\textwidth}
    \centering
    \scalebox{0.4}{
      \begin{tikzpicture}
        \tikzstyle{every node}=[circle, fill=lightgray, draw=black, inner sep=2pt, minimum size=1.5em, font=\footnotesize, text=black]
        \tikzstyle{edge}=[gray, line width=1mm]

        \node (s) at (0,0) {$s$};
        \node (a) at (2,-0.5) {};
        \node (b) at (1,-2) {};
        \node (c) at (1,1) {};
        \node (d) at (3.5,-1.5) {};
        \node (e) at (3,1) {};
        \node (f) at (-2,-1) {};
        \node (t) at (5,0) {$t$};
        \node (g) at (-1,-2.5) {};
        \node (h) at (3,2.5) {};
        \node (i) at (-0.75,2.5) {};
        \node (j) at (5,2) {};
        \node (k) at (-2.5,1.5) {};

        \draw[edge] (t) -- (j) -- (h) -- (i) -- (k) -- (f) -- (g) -- (b) -- (d) -- (e) -- (h);
        \draw[edge] (s) -- (c) -- (i) -- (f) -- (s);
        \draw[edge] (f) -- (b) -- (a) -- (d) -- (t);
        \draw[edge] (s) -- (a) -- (e) -> (t);

        \tikzstyle{edge}=[myred, line width=1mm]
        \draw[edge] (c) -- (e);
      \end{tikzpicture}
    }
    \caption{An instance of \textsc{Network Diversion}.}
  \end{subfigure}
\begin{subfigure}[t]{.3\textwidth}
    \centering
    \scalebox{0.4}{

      \begin{tikzpicture}
        \tikzstyle{every node}=[circle, fill=lightgray, draw=black, inner sep=2pt, minimum size=1.5em, font=\footnotesize, text=black]
        \tikzstyle{edge}=[gray, line width=1mm]

        \node (s) at (0,0) {$s$};
        \node (a) at (2,-0.5) {};
        \node (b) at (1,-2) {};
        \node (c) at (1,1) {};
        \node (d) at (3.5,-1.5) {};
        \node (e) at (3,1) {};
        \node (f) at (-2,-1) {};
        \node (t) at (5,0) {$t$};
        \node (g) at (-1,-2.5) {};
        \node (h) at (3,2.5) {};
        \node (i) at (-0.75,2.5) {};
        \node (j) at (5,2) {};
        \node (k) at (-2.5,1.5) {};

        \draw[edge] (t) -- (j) -- (h) -- (i) -- (k) -- (f) -- (g) -- (b) -- (d) -- (e) -- (h);
        \draw[edge] (s) -- (c) -- (i) -- (f) -- (s);
        \draw[edge] (f) -- (b) -- (a) -- (d) -- (t);

        \tikzstyle{edge}=[myred, line width=1mm]
        \draw[edge] (c) -- (e);

        \tikzstyle{edge}=[mygreen, line width=1mm]
        \draw[edge] (s) -- (a) -- (e) -> (t);
      \end{tikzpicture}
    }
    \caption{A $b$-avoiding $s$-$t$-path $P$.}
  \end{subfigure}
\begin{subfigure}[t]{.3\textwidth}
    \centering
    \scalebox{0.4}{
      \begin{tikzpicture}
        \tikzstyle{every node}=[circle, fill=lightgray, draw=black, inner sep=2pt, minimum size=1.5em, font=\footnotesize, text=black]
        \tikzstyle{edge}=[gray, line width=1mm]

        \node (s) at (0,0) {$s$};
        \node (a) at (2,-0.5) {};
        \node (b) at (1,-2) {};
        \node (c) at (1,1) {};
        \node (d) at (3.5,-1.5) {};
        \node (e) at (3,1) {};
        \node (f) at (-2,-1) {};
        \node (t) at (5,0) {$t$};
        \node (g) at (-1,-2.5) {};
        \node (h) at (3,2.5) {};
        \node (i) at (-0.75,2.5) {};
        \node (j) at (5,2) {};
        \node (k) at (-2.5,1.5) {};

        \draw[edge] (t) -- (j) -- (h) -- (i) -- (k) -- (f) -- (g) -- (b) -- (d) -- (e) -- (h);
        \draw[edge] (s) -- (c) -- (i) -- (f) -- (s);
        \draw[edge] (f) -- (b) -- (a) -- (d) -- (t);

        \tikzstyle{edge}=[myred, line width=1mm]
        \draw[edge] (c) -- (e);

        \tikzstyle{edge}=[mygreen, line width=1mm]
        \draw[edge] (s) -- (a) -- (e) -> (t);

        \tikzstyle{every node}=[circle, draw=myteal, fill=cyan, inner sep=2pt, minimum size=0.7em]
        \node (aces) at (1.5,0.5) {$v$};
        \node (cehi) at (2,1.75) {$u$};
        \node (ehjt) at (4,1.25) {};
        \node (det) at (4,-0.25) {};
        \node (ade) at (2.75,-0.3) {};
        \node (abd) at (2,-1.25) {};
        \node (abfs) at (0,-1) {};
        \node (bfg) at (-0.75,-1.75) {};
        \node (cifs) at (-0.75,1) {};
        \node (fik) at (-1.75,1.25) {};

        \tikzstyle{edge}=[myteal, line width=0.7mm]
        \draw[edge] (fik) to (cifs) to[out=45,in=180] (cehi) -- (ehjt) -- (det) -- (ade) -- (abd);
        \draw[edge] (abd) to (abfs);
        \draw[edge] (abfs) to (bfg);
        \draw[edge] (aces) to[out=0,in=135] (ade);
        \draw[edge] (aces) to[out=190,in=-20] (cifs);
        \draw[edge] (cifs) to[out=-90,in=135] (abfs);
        \draw[edge] (abfs) to[out=45,in=270] (aces);
      \end{tikzpicture}
    }
    \caption{$G^\star - \{b^*\}$  without outer face.}
  \end{subfigure}
\begin{subfigure}[t]{.3\textwidth}
    \centering
    \scalebox{0.4}{
      \begin{tikzpicture}
        \tikzstyle{every node}=[circle, fill=lightgray, draw=black, inner sep=2pt, minimum size=1.5em, font=\footnotesize, text=black]
        \tikzstyle{edge}=[gray, line width=1mm]

        \node (s) at (0,0) {$s$};
        \node (a) at (2,-0.5) {};
        \node (b) at (1,-2) {};
        \node (c) at (1,1) {};
        \node (d) at (3.5,-1.5) {};
        \node (e) at (3,1) {};
        \node (f) at (-2,-1) {};
        \node (t) at (5,0) {$t$};
        \node (g) at (-1,-2.5) {};
        \node (h) at (3,2.5) {};
        \node (i) at (-0.75,2.5) {};
        \node (j) at (5,2) {};
        \node (k) at (-2.5,1.5) {};

        \draw[edge] (t) -- (j) -- (h) -- (i) -- (k) -- (f) -- (g) -- (b) -- (d) -- (e) -- (h);
        \draw[edge] (s) -- (c) -- (i) -- (f) -- (s);
        \draw[edge] (f) -- (b) -- (a) -- (d) -- (t);

        \tikzstyle{edge}=[myred, line width=1mm]
        \draw[edge] (c) -- (e);

        \tikzstyle{edge}=[mygreen, line width=1mm]
        \draw[edge] (s) -- (a) -- (e) -> (t);

        \tikzstyle{every node}=[circle, draw=myteal, fill=cyan, inner sep=2pt, minimum size=0.75em]
        \node (aces) at (1.5,0.5) {$v$};
        \node (cehi) at (2,1.75) {$u$};
        \node (ehjt) at (4,1.25) {};
        \node (det) at (4,-0.25) {};
        \node (ade) at (2.75,-0.3) {};
        \node (abd) at (2,-1.25) {};
        \node (abfs) at (0,-1) {};
        \node (bfg) at (-0.75,-1.75) {};
        \node (cifs) at (-0.75,1) {};
        \node (fik) at (-1.75,1.25) {};

        \tikzstyle{every node}=[circle, draw=myteal, fill=cyan, inner sep=2pt, minimum size=0.35em]
        \node (fi) at (-1.3,1) {};
        \node (fs) at (-0.75,-0.5) {};
        \node (bf) at (-0.45,-1.5) {};
        \node (ci) at (0.0,1.75) {};
        \node (eh) at (3,1.75) {};
        \node (de) at (3.3,-0.5) {};
        \node (ad) at (2.8,-1) {};
        \node (ab) at (1.5,-1.25) {};
        \node (sc) at (0.5,0.5) {};

        \tikzstyle{edge}=[myteal, line width=0.7mm]
        \draw[edge] (fik) -- (fi) -- (cifs) -- (ci) -- (cehi) -- (eh) -- (ehjt) -- (det) -- (de) -- (ade) -- (ad) -- (abd) -- (ab) -- (abfs) -- (bf) -- (bfg);
        \draw[edge] (ade) -- (aces) -- (sc) -- (cifs) -- (fs) -- (abfs) to [bend right=20] (aces);
      \end{tikzpicture}
    }
    \caption{Subdivide non-$P^*$ edges.}
  \end{subfigure}
\begin{subfigure}[t]{.3\textwidth}
    \centering
    \scalebox{0.4}{
      \begin{tikzpicture}
        \tikzstyle{every node}=[circle, fill=lightgray, draw=black, inner sep=2pt, minimum size=1.5em, font=\footnotesize, text=black]
        \tikzstyle{edge}=[gray, line width=1mm]

        \node (s) at (0,0) {$s$};
        \node (a) at (2,-0.5) {};
        \node (b) at (1,-2) {};
        \node (c) at (1,1) {};
        \node (d) at (3.5,-1.5) {};
        \node (e) at (3,1) {};
        \node (f) at (-2,-1) {};
        \node (t) at (5,0) {$t$};
        \node (g) at (-1,-2.5) {};
        \node (h) at (3,2.5) {};
        \node (i) at (-0.75,2.5) {};
        \node (j) at (5,2) {};
        \node (k) at (-2.5,1.5) {};

        \draw[edge] (t) -- (j) -- (h) -- (i) -- (k) -- (f) -- (g) -- (b) -- (d) -- (e) -- (h);
        \draw[edge] (s) -- (c) -- (i) -- (f) -- (s);
        \draw[edge] (f) -- (b) -- (a) -- (d) -- (t);

        \tikzstyle{edge}=[myred, line width=1mm]
        \draw[edge] (c) -- (e);

        \tikzstyle{edge}=[mygreen, line width=1mm]
        \draw[edge] (s) -- (a) -- (e) -> (t);

        \tikzstyle{every node}=[circle, draw=myteal, fill=cyan, inner sep=2pt, minimum size=0.75em]
        \node (aces) at (1.5,0.5) {$v$};
        \node (cehi) at (2,1.75) {$u$};
        \node (ehjt) at (4,1.25) {};
        \node (det) at (4,-0.25) {};
        \node (ade) at (2.75,-0.3) {};
        \node (abd) at (2,-1.25) {};
        \node (abfs) at (0,-1) {};
        \node (bfg) at (-0.75,-1.75) {};
        \node (cifs) at (-0.75,1) {};
        \node (fik) at (-1.75,1.25) {};

        \tikzstyle{every node}=[circle, draw=myteal, fill=cyan, inner sep=2pt, minimum size=0.35em]
        \node (fi) at (-1.4,1) {};
        \node (fs) at (-0.75,-0.5) {};
        \node (bf) at (-0.35,-1.5) {};
        \node (ci) at (0.5,1.75) {};
        \node (eh) at (3,1.75) {};
        \node (de) at (3.25,-0.5) {};
        \node (ad) at (2.5,-1) {};
        \node (ab) at (1.5,-1.25) {};

        \tikzstyle{edge}=[myteal, line width=0.7mm]
        \draw[edge] (fik) -- (fi) -- (cifs);
        \draw[edge] (cehi) -- (eh) -- (ehjt) -- (det) -- (de) -- (ade) -- (ad) -- (abd) -- (ab) -- (abfs) -- (bf) -- (bfg);
        \draw[edge] (ade) -- (aces) to [bend left=20] (cifs);

        \tikzstyle{edge}=[orange, line width=0.9mm]
        \draw[edge] (cehi) -- (ci) -- (cifs) -- (fs) -- (abfs) to [bend right=20] (aces);
\end{tikzpicture}
    }
    \caption{An odd $u$-$v$-path $C^\star$.}
  \end{subfigure}
\begin{subfigure}[t]{.3\textwidth}
    \centering
    \scalebox{0.4}{
      \begin{tikzpicture}
        \tikzstyle{every node}=[circle, fill=lightgray, draw=black, inner sep=2pt, minimum size=1.5em, font=\footnotesize, text=black]
        \tikzstyle{edge}=[gray, line width=1mm]

        \node (s) at (0,0) {$s$};
        \node (a) at (2,-0.5) {};
        \node (b) at (1,-2) {};
        \node (c) at (1,1) {};
        \node (d) at (3.5,-1.5) {};
        \node (e) at (3,1) {};
        \node (f) at (-2,-1) {};
        \node (t) at (5,0) {$t$};
        \node (g) at (-1,-2.5) {};
        \node (h) at (3,2.5) {};
        \node (i) at (-0.75,2.5) {};
        \node (j) at (5,2) {};
        \node (k) at (-2.5,1.5) {};

        \draw[edge] (t) -- (j) -- (h) -- (i) -- (k) -- (f) -- (g) -- (b) -- (d) -- (e) -- (h);
        \draw[edge] (c) -- (s);
        \draw[edge] (a) -- (e) -- (t);
        \draw[edge] (i) -- (f) -- (b) -- (a) -- (d) -- (t);

        \tikzstyle{edge}=[myred, line width=1mm]
        \draw[edge] (c) -- (e);
      \end{tikzpicture}
    }
    \caption{$G - C$.}
  \end{subfigure}
  \caption{An illustration of the different steps in our algorithm for \pname{Network Diversion} on planar graphs. We assume here that all edges have the same weight. Steps~(a) -- (c) are directly from our algorithm and Step~(d) shows the reduction from finding a shortest path that crosses the computed path an odd number of times to finding a shortest odd path given by \Cref{cor:path-odd-number-f}.
  Step~(e) shows the solution of running Derigs' algorithm and Step~(f) shows the solution to the input instance we computed.}
  \label{fig:nd-solved}
\end{figure}

Our algorithm for \pname{Network Diversion} on planar graphs can be explained
relatively succinctly, and is illustrated in \Cref{fig:nd-solved}.
The algorithm proceeds in five steps described below, and the running time
follows from the previous results (\Cref{cor:path-odd-number-f} and \Cref{lem:m-is-n}).

\bigskip
\noindent
\textbf{Computing the dual.}  Computing the dual graph of a given
plane graph is folklore wisdom, but we include a high-level description
for completeness.
The input to the algorithm is a plane graph, two vertices and an edge.
In our implementation, all input graphs are given as straight-line
embeddings (which always exist).
However, we only use the fact that every vertex has an ordered list of edges incident to it (ordered in counter-clockwise fashion).
Computing the dual of a plane graph $G$ then simply proceeds as follows.  Let $e = uv$ be an edge, and let $f_{uv}$ and $f_{vu}$ be the faces one each side of
$e$.  We find these faces by traversing directed edges, starting with
$uv$, and following to the \emph{next} edge incident on $v$ after $e$.
When we return to $uv$, we have computed $f_{uv}$, and we can traverse
the edge the other direction, $vu$ to obtain $f_{vu}$.

This creates a map from each edge to its two (or one) neighboring faces,
and we thus have the vertices and the edges of the dual graph.
In total, we visit each edge exactly twice (once in each orientation), and thus it takes linear $O(n)$ time to compute the dual.
We note that \emph{bridges} are only incident on one face.

We can finally describe our algorithm.
To this end, let~$G=(V,E)$ be an edge-weighted graph and let two vertices~$s$ and~$t$ and an edge~$b$ be given.
Let~$b^\star=\{u,v\}$ be the dual of~$b$.
We assume without loss of generality that~$G$ is connected.

\begin{enumerate}
\item Compute a path~$P$ between~$s$ and~$t$ in $G-\{b\}$ or conclude that~$b$ is already a bridge between~$s$ and~$t$ in~$G$.
\item Compute the dual graph~$G^\star$ of~$G$.
\item Find a shortest $u$-$v$-path $C^\star$ in $G^\star - \{b^\star\}$ that uses an odd number of edges from $E_P^\star$ using Derigs' algorithm, or conclude that none exist.
\item Return $C$.
\end{enumerate}

\medskip
\noindent
Step 3 is where we overcome the issue raised by Cullenbine et al.~\cite{cullenbine2013theoreticalcomputational}.  We compute a shortest  odd path guaranteed to be \emph{simple} in $G^\star$, which necessarily corresponds to a minimal cut in~$G$.

\begin{theorem}
    Given a plane graph $G$, $s$, $t$, and $b$, we can compute the minimum diversion set or correctly conclude that none exist in $O(n \log n)$ time.
\end{theorem}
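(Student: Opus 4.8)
The plan is to prove the theorem by showing that computing a minimum diversion set is equivalent to computing a minimum-weight minimal $s$-$t$-cut containing $b$, and then realizing the latter as a shortest odd $u$-$v$-path in the transformed dual, exactly as in Steps~1--4. I would begin with the reduction at the level of cuts. Fix an optimal diversion set $F$, so that $b$ is an $s$-$t$-bridge in $G-F$, and set $C = F \cup \{b\}$. Then $C$ is an $s$-$t$-cut while $F$ is not (as $s$ and $t$ remain connected in $G-F$ through $b$); hence any minimal $s$-$t$-cut $C' \subseteq C$ must contain $b$, since otherwise $C' \subseteq F$ would make $F$ a cut, and moreover $w(C') \le w(C) = w(F) + w(b)$. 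Conversely, for any minimal $s$-$t$-cut $C'$ with $b \in C'$, the set $C' \setminus \{b\}$ is a diversion set: minimality guarantees that the proper subset $C' \setminus \{b\}$ is not a cut, so $s$ and $t$ stay connected after its deletion, while removing $b$ as well disconnects them. Since the diversion set associated with a cut $C'$ has weight $w(C') - w(b)$ and $w(b)$ is a fixed constant, minimizing over diversion sets and minimizing over minimal $s$-$t$-cuts containing $b$ have the same optimizers; I would record this as the first claim.

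Next I would translate the cut problem into the dual. By Diestel's Proposition~4.6.1, minimal cuts of $G$ are exactly simple cycles of $G^\star$, and a cut contains $b$ precisely when the corresponding cycle contains $b^\star$. By \Cref{cor:oddcrossing}, such a cycle is an $s$-$t$-cut if and only if it uses an odd number of edges of $E_P^\star$, where $P$ is the $b$-avoiding $s$-$t$-path produced in Step~1. Because $b \notin P$, we have $b^\star \notin E_P^\star$, so, writing $b^\star = \{u,v\}$, every simple cycle through $b^\star$ is the edge $b^\star$ together with a simple $u$-$v$-path in $G^\star - \{b^\star\}$, and the parity of its intersection with $E_P^\star$ is governed entirely by that path. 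Thus a minimum-weight minimal $s$-$t$-cut containing $b$ corresponds to a minimum-weight simple $u$-$v$-path in $G^\star - \{b^\star\}$ using an odd number of edges of $E_P^\star$ (the cut weight exceeding the path weight by the constant $w(b)$). This is precisely the object computed by \Cref{cor:path-odd-number-f}, which, via Derigs' algorithm (\Cref{thm:derig}), returns such a path in $O(m \log n)$ time or reports that none exists.

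The step I expect to be the crux is guaranteeing \emph{simplicity} of the resulting cycle, which is exactly the obstacle raised by Cullenbine et al.~\cite{cullenbine2013theoreticalcomputational}. The key point is that \Cref{cor:path-odd-number-f} returns a genuine \emph{path} rather than a walk, since Derigs' blossom-augmentation procedure always outputs a simple shortest odd path; re-adding $b^\star$ to a simple $u$-$v$-path that avoids $b^\star$ closes it into a simple cycle, which by Diestel's proposition is a minimal cut. For the ``none exist'' output I would dispatch both degenerate cases. If Step~1 finds no $s$-$t$-path in $G - \{b\}$, then, since $G$ is connected, $b$ is already an $s$-$t$-bridge and $\emptyset$ is an optimal diversion set. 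If instead Step~3 finds no odd $u$-$v$-path, then no simple cycle through $b^\star$ crosses $P$ an odd number of times, so no minimal $s$-$t$-cut contains $b$, and by the first claim no diversion set exists.

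Finally I would assemble the running time. Step~1 is a single graph search, Step~2 builds $G^\star$ in $O(n)$ time as described, and the subdivision and parity setup inside \Cref{cor:path-odd-number-f} are linear; the dominating cost is the shortest-odd-path computation, which runs in $O(m \log n)$. Since $G$ is planar, \Cref{lem:m-is-n} gives $m \in O(n)$, so the total running time is $O(n \log n)$, which completes the argument.
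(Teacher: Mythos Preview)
Your proof is correct and follows the same approach as the paper: reduce to finding a minimum-weight simple cycle through $b^\star$ in the dual that crosses $E_P^\star$ an odd number of times, then realize this as a shortest odd $u$-$v$-path in $G^\star-\{b^\star\}$ via \Cref{cor:path-odd-number-f} and Derigs' algorithm. If anything, your write-up is more thorough than the paper's---you explicitly justify the equivalence between diversion sets and minimal $s$-$t$-cuts containing $b$, note that $b^\star\notin E_P^\star$ so the parity is carried entirely by the $u$-$v$-path, and spell out the simplicity guarantee that resolves the Cullenbine et al.\ obstacle.
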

\begin{proof}
    If~$s$ and~$t$ do not belong to the same connected component of~$G$, then we can safely output no.
    If there is no~$s$-$t$-path in~$G - \{b\}$, then~$b$ is already an~$s$-$t$-bridge and we can safely output yes.
Otherwise, we find an $s$-$t$-path~$P$ that does not use $b$.
    What remains is to show that a minimum minimal $s$-$t$-cut containing $b$ corresponds to a shortest cycle using $b^\star$ that uses an odd number of edges from $P^\star$.
    The cost aspect follows from the correspondence between length of cycles in the dual and cost of minimal cuts in the primal graph, so the only thing remaining is to show that this cycle is indeed an $s$-$t$-cut in the primal.
By \Cref{cor:oddcrossing}, given a cut~$C$, and a $u$-$v$-path $S$, $u$ and~$v$ are on different sides of the cut if and only if~$S$ uses an odd number of edges from~$C$, or equivalently, $C$ uses an odd number of edges from~$S$.
    Hence, a shortest cycle containing $b^\star$ and using an odd number of edges from $P^\star$ indeed corresponds to an $s$-$t$-cut in the primal.
It follows that the algorithm outlined above is correct.
For the running time, note that computing the dual $G^\star$, finding any $s$-$t$-path, and subdividing edges, all take $O(n+m)=O(n)$ time. Running Derigs' algorithm takes $O(n \log n)$ time by \Cref{lem:m-is-n}.
\end{proof}

\section{Implementation and Experimental Setup}
\label{sec:implementation}

\begin{table}
  \small
  \centering
    \begin{tabular}{l  r  r  r  r}
      & n & m & na\"ive & union--find \\
      \cmidrule{2-5}
       Power         & 1723   & 2394   & 2 ms   & 2 ms   \\
       Oldenburg     & 6106   & 7035   & 6 ms   & 5 ms   \\
       San~Joaquin   & 18263  & 23874  & 22 ms  & 18 ms  \\
       Cali~Road     & 21048  & 21693  & 21 ms  & 18 ms  \\
       Musae~Git     & 37700  & 289003 & 125 ms & 123 ms \\
       SF~Road~N     & 174956 & 223001 & 225 ms & 191 ms \\
       Ca~Citeseer   & 227321 & 814137 & 629 ms & 608 ms \\
      \cmidrule{2-5}
       Delaunay~50k  & 50000  & 149961 & 106 ms & 96 ms  \\
       Delaunay~100k & 100000 & 299959 & 219 ms & 205 ms \\
       Delaunay~150k & 150000 & 449965 & 346 ms & 315 ms \\
       Delaunay~200k & 200000 & 599961 & 476 ms & 449 ms \\
    \end{tabular}
    \caption{Comparison of our two implementations for computing a shortest odd path on different graphs. The first column shows the name of the data set (the first seven datasets are real world data, the next bottom four are randomly generated Delaunay graphs of different sizes). The first columns~$n$ and~$m$ show the number of vertices and edges in the respective graph and the last two columns denote the running times of the na\"ive implemenation and the implementation using the union--find data structure.}
    \label{tab:bench-shortest}
\end{table}

To the best of our knowledge, we provide the first implementation of Derigs' shortest odd path algorithm.
We believe that this implementation might be of independent interest.
It can handle edge-weighted graphs, and solves graphs of one million edges in less than a second.

In addition, we implement the algorithm for \textsc{Network Diversion} on planar input graphs.
We note here that in our specific implementation, we only work with straight-line embeddings of planar graphs, but this can easily be generalized and replaced with any library that can, given a
plane graph, return a mapping between vertices and faces in the input graph to faces and vertices, respectively, of the dual graph.
Given a plane graph, computing the dual, and finding a $b$-avoiding
$s$-$t$-path are both simple $O(n)$ algorithms, and the bulk of the
running time of the algorithm is finding a shortest odd path using Derigs'
algorithm.

\label{subsec:speed}
We want to point out one particular detail about our implementation.
We use Edmond's Blossom
algorithm~\cite{edmonds1965pathstrees} for computing matchings (or, strictly speaking, \emph{alternating paths}).
For the following technical discussion, we assume familiarity with this algorithm.
When we have found and computed a blossom, we shrink it into a pseudonode by setting the base of all its vertices to the base of the blossom.  Whenever we consider a potential blossom edge, we see if the two vertices have the same base, and if so, then they are in fact already in the same pseudonode and the edge can be discarded.
Whenever we set~$u$ to have the base~$\beta$, we also have to see if any other vertices have~$u$ as their base and set their bases to~$\beta$ as well.
Derigs does not specify which data structure to use to update these bases efficiently and the na\"ive way would search through all vertices in the graph in linear time,
which would not actually give the claimed running time.
Now we go through only the vertices that have~$u$ as their base, in time
linear to the count of vertices that need to be updated.
The second version is to use a structure resembling union--find, where
each disjoint set and its representative is a blossom and its base.  To
update the base of~$u$ we simply set the new base and do nothing else.
When we require the base of a vertex, we recursively query its
representative's base and contract the path along the way in the style
of union--find.

\begin{figure}[t]
  \centering
  \includegraphics[width=0.75\textwidth]{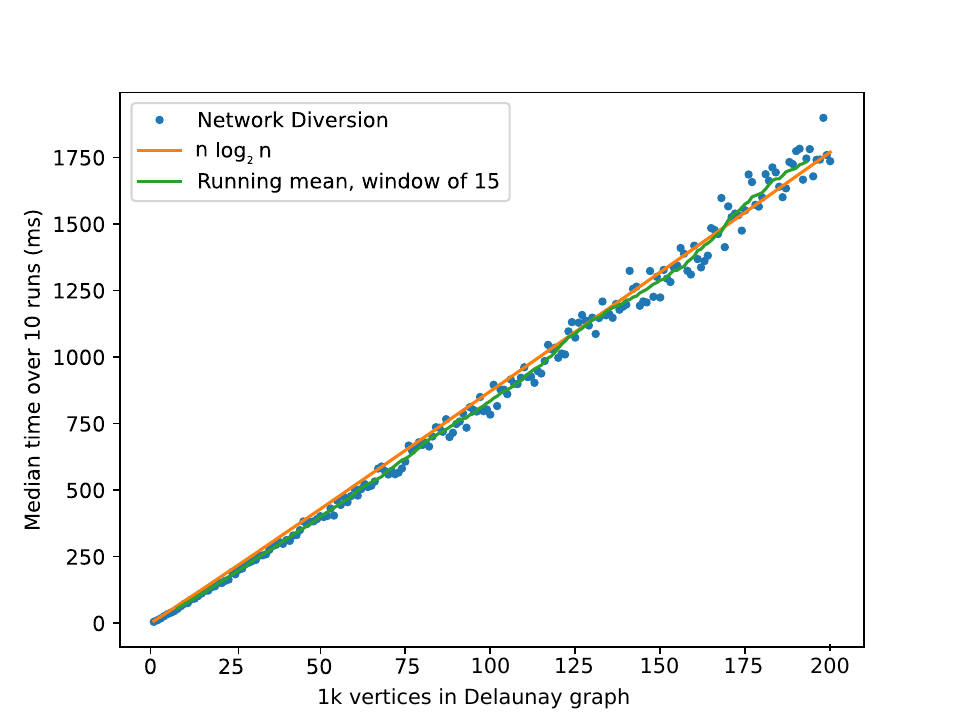}
  \caption{Running times of our implementation for \textsc{Network Diversion} on Delaunay graphs of different sizes. The x-axis shows the number of vertices in 1000s and the blue dots show the running time in milliseconds. The green line shows a running mean running time over the previous 15 instances. The orange line shows the line~$\nicefrac{n \log_2 n}{100}$ ms.}
  \label{fig:nd-bench-delauay}
\end{figure}

All of our experiments were performed on a laptop with an Intel(R) Core(TM) i5-8400 CPU @ 2.80GHz with 192KiB, 1.5 MiB, and 9 MiB L1, L2, and L3 cache, respectively, and 15G RAM, running on GNU/Linux Ubuntu 20.04.6 LTS.
We use Rust edition 2021 compiled with \texttt{rustc 1.81.0-nightly}.

We tested our implementation of Derigs' algorithm for finding a shortest odd path on both real-world (non-planar) data and on randomly generated Delaunay graphs.
A Delaunay graph is the graph formed by the edges of the Delaunay triangulation of a set of randomly drawn points in the plane, where vertices represent the points and edges connect pairs of points that form the sides of the triangles in the triangulation.
Delaunay graphs are planar and triangulated, meaning that every face (except the outer) is a triangle.

For \textsc{Network Diversion} on planar graphs, we implemented and compared our algorihm to the previous implementation by Cullenbine et al.~\cite{cullenbine2013theoreticalcomputational}.
They used both real-world and synthetic data, where they generated weighted grid graphs.
Unfortunately, we do not have access to the real-world data sets they used or the exact MILP formulation they used.
Instead, we generated random planar graphs similar to the kind of graphs that Cullenbine et al.\ generated.
In particular, we generated randomly weighted grid graphs of different sizes and chose~$s, t$, and~$b$ uniformly at random.
We also tested our implementation on the generated Delaunay graphs.
We ran each experiment 100 times and took the average running time.
Since our algorithm does not in any way use the fact that the given graph is a grid graph or a Delaunay graph and the running times for similar-sized graphs are comparable, it is safe to assume that the running times scale with the number of vertices, faces, and edges.

\section{Experimental Results}
\label{sec:experiments}

In this section, we discuss our experimental findings.
First, we tested our implementation for finding a \emph{shortest odd path} both with a na\"ive implementation of Edmond's Blossom
algorithm and with the union--find data structure discussed in \Cref{subsec:speed}.
The results are depicted in \Cref{tab:bench-shortest} and the implementation with union--find turned out to be faster by roughly 10\% on average.
It sped up the algorithm especially on sparser graphs but we also mention that it occasionally was slower when the number of edges approached~$\binom{n}{2}$.
However, since planar graphs are sparse, we have chosen union--find for the remainder of the experiments.

\begin{table}
  \small
  \centering
    \begin{tabular}{l  r  r  r  r r r}
      & n & m & $t_{25}$ & $t_{50}$ & $t_{75}$ \\
      \cmidrule{2-6}
Grid $10 \times 10$&\num{100}&\num{180}&0.00&0.00&0.00\\
Grid $20 \times 20$&\num{400}&\num{760}&0.00&0.00&0.00\\
Grid $30 \times 30$&\num{900}&\num{1740}&0.00&0.00&0.00\\
Grid $40 \times 40$&\num{1600}&\num{3120}&0.00&0.00&0.00\\
Grid $50 \times 50$&\num{2500}&\num{4900}&0.01&0.01&0.01\\
Grid $60 \times 60$&\num{3600}&\num{7080}&0.01&0.01&0.01\\
Grid $70 \times 70$&\num{4900}&\num{9660}&0.01&0.02&0.02\\
Grid $80 \times 80$&\num{6400}&\num{12640}&0.02&0.02&0.02\\
Grid $90 \times 90$&\num{8100}&\num{16020}&0.03&0.03&0.03\\
Grid $100 \times 100$&\num{10000}&\num{19800}&0.04&0.04&0.04\\
Grid $200 \times 200$&\num{40000}&\num{79600}&0.16&0.16&0.17\\
Grid $300 \times 300$&\num{90000}&\num{179400}&0.35&0.38&0.40\\
Grid $400 \times 400$&\num{160000}&\num{319200}&0.66&0.74&0.77\\
Grid $500 \times 500$&\num{250000}&\num{499000}&0.90&1.23&1.28\\
Grid $600 \times 600$&\num{360000}&\num{718800}&1.62&1.84&1.99\\
Grid $700 \times 700$&\num{490000}&\num{978600}&2.42&2.55&2.69\\
Grid $800 \times 800$&\num{640000}&\num{1278400}&3.19&3.50&3.67\\
Grid $900 \times 900$&\num{810000}&\num{1618200}&4.06&4.71&4.92\\
Grid $1000 \times 1000$&\num{1000000}&\num{1998000}&5.58&5.81&6.06\\
Grid $2000 \times 2000$&\num{4000000}&\num{7996000}&25.96&28.88&29.74\\
    \end{tabular}
    \caption{Running times in seconds for solving \textsc{Network Diversion} on weighted grid graphs.
    The weights on the edges are floats chosen uniformly at random between~0 and 1000. The times are the 25\%, 50\%, and 75\% quartiles, respectively.}
    \label{tab:bench-grid-nd}
\end{table}

Next, we tested our implementation for \textsc{Network Diversion} on randomly generated Delaunay graphs.
They are planar by construction, and additionally, they provide diverse structures, capturing a wide variety of graph topologies. They are also relatively dense, which allows us to test the algorithm's performance on challenging instances.
The results are shown in \Cref{fig:nd-bench-delauay}.
We found that for a graph with~$n$ vertices, $\nicefrac{n \log n}{100}$ milliseconds was a very precise estimate for the median running time.

Finally, we compared our running time on randomly generated grid graphs to the running times reported by Cullenbine et al.~\cite{cullenbine2013theoreticalcomputational} on similarly generated grid graphs.
Our running times are reported in \Cref{tab:bench-grid-nd} and the results are compared to the results by Cullenbine et al.\ in~\Cref{fig:compare}.
The grids were generated by first creating an $N \times N$ grid, then uniformly at random select $s$, $t$, and $b$, and add random weights to all edges.  This matches the method of
Cullenbine et al.~\cite{cullenbine2013theoreticalcomputational}, with the exception that they always choose $s$ and $t$ to be on the outer face.
Running our algorithm with $s$ and $t$ chosen to be on the outer face performs at average slightly faster than $t_{50}$ reported in \Cref{tab:bench-grid-nd}.
On grids of size larger than $100 \times 100$, Cullenbine et al.\ reported timeouts and for grids of size at most $100 \times 100$, our algorithm is roughly 1000 times faster.  For reference, a line $c n \log n$---for some constant~$c$ that is chosen to fit our curve---is also shown.
We note that we only provide the running times as Cullenbine et al.\ reported, and that they use \textsc{ampl/cplex}~12.1 on a Linux machine with four 2.27--GHz processors, whereas ours is run on a  laptop on a single 2.80--GHz core.

\begin{figure}
    \centering
    \includegraphics[width=\linewidth]{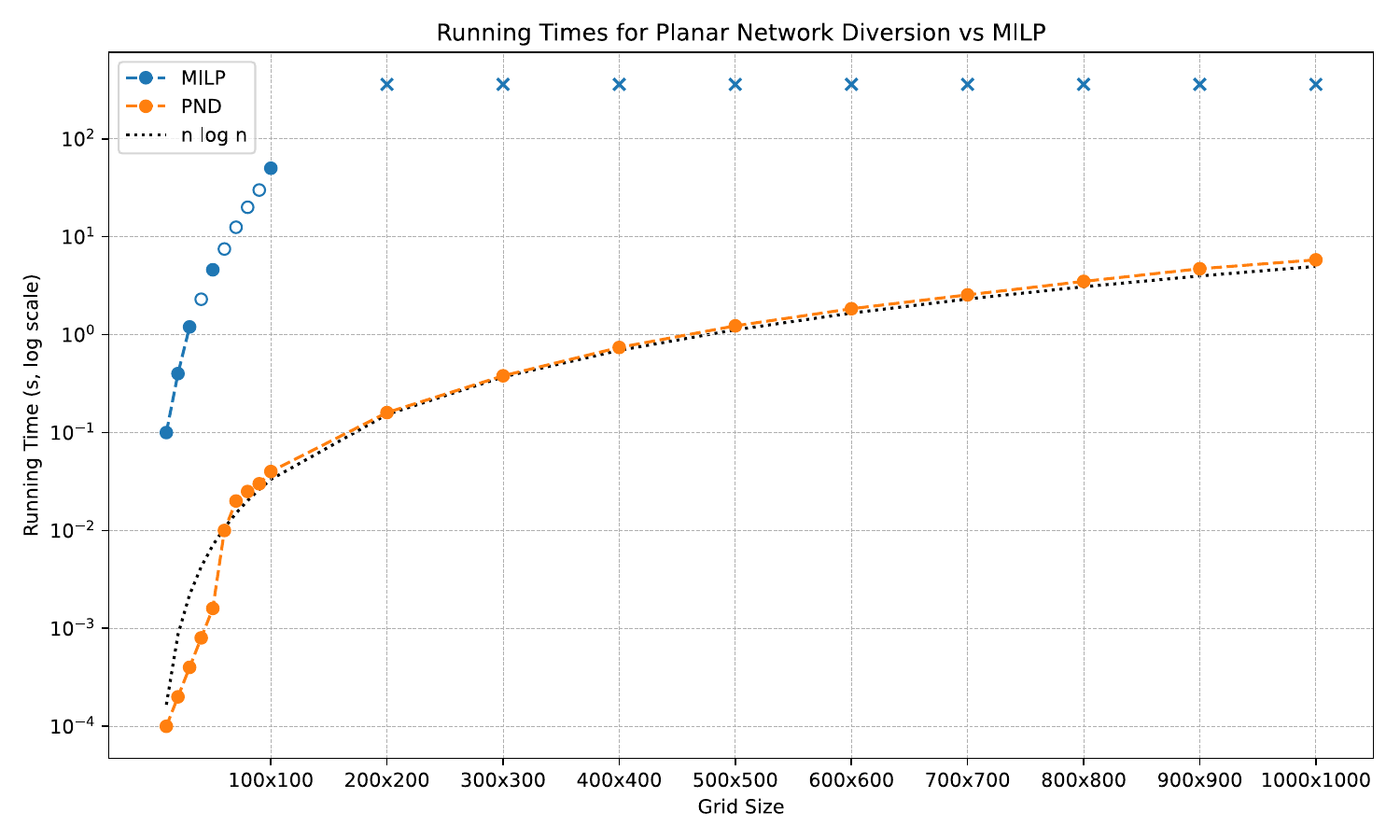}
    \caption{Comparison of the previous \textsc{milp} algorithm by Cullenbine et al.~\cite{cullenbine2013theoreticalcomputational} (blue, with filled circles for the values they provide and empty circles for interpolated values, and crosses for timeouts) and ours (orange) on grids of different sizes.
    }
    \label{fig:compare}
\end{figure}

In their experiments, the grids of size~$100 \times 100$ ran in roughly 50 seconds\footnote{They report $\mu=50.1~\text{s}$, $\sigma=39.6~\text{s}$ for 10 runs.  Our algorithm runs in time $\mu=0.0341~\text{s}$, $\sigma=0.0061~\text{s}$.} and larger grids timed out.
Our algorithm solved grids of size~$100 \times 100$ in less than 0.05 seconds which is roughly 100 times faster than the MILP.
Our algorithm was able to solve grids of size~$2000 \times 2000$ in less than 30 seconds.

\section{Conclusion}
\label{sec:conclusion}

We developed a simple yet efficient~$O(n \log n)$-time algorithm for \textsc{Network Diversion} for planar graphs.
We implemented the algorithm in Rust and it performs extremely well, being able to solve instances with millions of nodes in less than 30 seconds.

This is the first deterministic algorithm for \textsc{Planar Network Diversion} and the only algorithm that handles edge weights.  The algorithm is significantly simpler than the previous known algorithm by Bentert et al.~\cite{bentert2024twosets}.

The complexity of \textsc{Network Diversion} on general graphs is still open, and is a very interesting problem. Indeed, our technique does not even apply to graphs of genus one.
However, we conjecture that the problem remains polynomial-time solvable on these graphs.
Another interesting question is parameterization by \emph{crossing number} as many road and transportation networks are not planar due to the existence of bridges and underground tunnels.  However, it is often safe to assume that these graphs can be drawn in the plane with few crossing edges.
Is it possible to solve \textsc{Network Diversion} in time $f(c) \cdot n^{O(1)}$ for some function~$f$, where~$c$ is the crossing number of the given graph?

\bigskip
\noindent
\textbf{Acknowledgments.} We sincerely thank Petr Golovach and Tuukka Korhonen for their insightful discussions in the early stages of this work, which helped shape the development of the algorithm presented in this paper.



\clearpage

\appendix

\section{Illustration of diverse cuts}

Diverse minimal $s$-$t$-cuts computed on a Delaunay graph with 35 vertices and 91 edges with fixed~$s$ and~$t$.
The weights on the edges were chosen to be inverse proportional~$1/\ell$ to its length.
For each edge $e \in E(G)$, we run \textsc{Network Diversion} with input $(G, s, t, e)$.
Several instances yield the same solution, resulting in significantly fewer distinct solutions (34) than the 91 edges in the graph, which one might initially expect to correspond to the number of computed cuts.
Most cuts were found twice with one cut being found nine times.
Here, we have plotted all unique solutions, sorted by cost, with the cheapest being the true minimum $s$-$t$-cut.
Total computation time was around 0.5 seconds. (Total elapsed time was 469 ms, with each computation taking $5.162 \pm 2.79$~ms, incl.\ call via \texttt{subprocess.run} from Python.)

\begin{figure}[h!]
    \centering
    \includegraphics[width=0.82\textwidth]{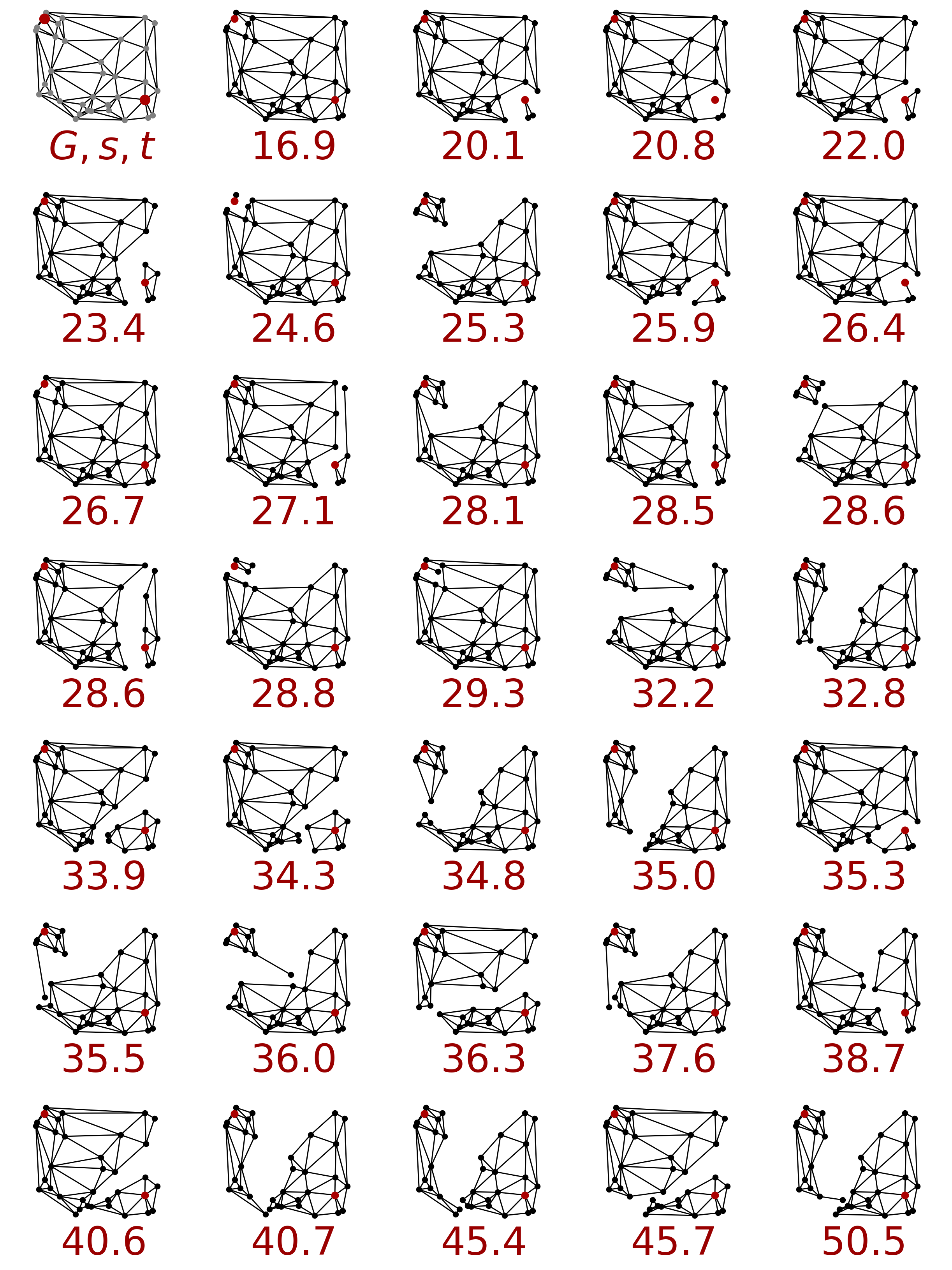}
    \caption{Diverse minimal $s$-$t$-cuts, with associated costs, computed on a Delaunay graph with 35 vertices and 91 edges.
    }
    \label{fig:diverse-cuts}
\end{figure}

\end{document}